
\documentclass[12pt, draftclsnofoot, onecolumn]{IEEEtran}
\normalsize

\usepackage{amsfonts}
\usepackage{amsmath}
\usepackage{amssymb}
\usepackage{graphicx}
\usepackage{mathtools}
\usepackage{adjustbox}
\usepackage[colorlinks,linkcolor=black,anchorcolor=black,citecolor=black]{hyperref}
\usepackage{cite} 

\usepackage{amsthm} 

\usepackage{soul}  

\newcommand\mycom[2]{\genfrac{}{}{0pt}{}{#1}{#2}} 

\usepackage{xcolor} 

\newcommand{\R}[1]{\textcolor{red}{#1}}

\usepackage[normalem]{ulem}
\newtheorem{theorem}{Theorem}
\newtheorem{proposition}{Proposition}
\newtheorem{lemma}{Lemma}
\newtheorem{example}{Example}

\begin{document}
\title{Polarization-Adjusted Convolutional (PAC) Codes as a Concatenation of Inner Cyclic and Outer Polar- and Reed-Muller-like Codes}

\author{Mohsen Moradi\textsuperscript{\href{https://orcid.org/0000-0001-7026-0682}{\includegraphics[scale=0.06]{./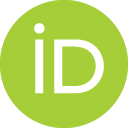}}},~\IEEEmembership{Member,~IEEE}
\thanks{This paper is based on \cite[Ch.~6]{moradi2022performance}. The author is with the Department of Electrical-Electronics Engineering, Bilkent University, Ankara TR-06800, Turkey (e-mail: moradi@ee.bilkent.edu.tr).}
}

\maketitle
\begin{abstract}
Polarization-adjusted convolutional (PAC) codes are a new family of linear block codes that can perform close to the theoretical bounds in the short block-length regime. 
These codes combine polar coding and convolutional coding. 
In this study, we show that PAC codes are equivalent to a new class of codes consisting of inner cyclic codes and outer polar- and Reed-Muller-like codes.
We leverage the properties of cyclic codes to establish that PAC codes outperform polar- and Reed-Muller-like codes in terms of minimum distance.
\end{abstract}
\begin{IEEEkeywords}
PAC codes, cyclic codes, polar coding, channel coding, minimum distance.
\end{IEEEkeywords}


\section{Introduction}

\IEEEPARstart{T}{he} recently proposed polarization-adjusted convolutional (PAC) codes are a family of linear codes that can approach theoretical limits for short block lengths \cite{arikan2019sequential}.
In this paper, we establish a relationship between PAC codes and cyclic codes and demonstrate that PAC codes can be viewed as a concatenation of inner cyclic shift codes and outer polar codes.
We also investigate the weight distribution of the PAC codes.
By benefiting from cyclic shift matrices, we give a new proof to a result of \cite{li2019pre}, which says that the minimum distance, $d_{\text{min}}$, for PAC codes, is greater than or equal to the $d_{\text{min}}$ for polar- and Reed-Muller (RM)- like codes.

The usage of cyclic codes in our work is motivated by the work of \cite{luo2018analysis}, which designs and analyzes a particular permutation set of polar codes based on a $N/4$ -cyclic shift for practical applications, where $N$ is the length of the code block \cite{luo2020patent}. 
In our study, we generalize this algebraic result to the $m$-cyclic shift for $1\leq m \leq N$, offer an explicit proof, and demonstrate how the findings may be applied to be used in the PAC codes. 
In \cite{li2019pre}, they also proved that the sum of $\mathbf{g}_i$ ($i$th row of $F^{\otimes n}$ for $1\leq i<N$, where $F^{\otimes n}$ is the $n$th Kronecker power of $F = \begin{bsmallmatrix} 1 & 0\\ 1 & 1 \end{bsmallmatrix}$ with $n = \log_2 N$) with some rows below it (we represent this by $\underline{\mathbf{g}}_{i}$) has a weight greater than or equal to the weight of $\mathbf{g}_i$. 
We generalize this and prove that the sum of an odd number of clockwise cyclic shifts of $\underline{\mathbf{g}}_{i}$ also has a weight greater than or equal to the weight of $\mathbf{g}_i$.
Also, we prove that the summation of a row of the matrix $F^{\otimes n}$ with a row below it is equal to some clockwise cyclic shifts of that row, and we use this to prove that the $d_{\text{min}}$ for PAC codes is greater than or equal to $d_{\text{min}}$ for the polar- and RM-like codes.
The weight distribution of linear codes dictates the performance of maximum likelihood (ML) decoding, which can be well estimated by the union bound, particularly at high signal-to-noise ratios (SNRs). 
This implies that PAC codes outperform polar- and RM-like codes regarding error correction performance. 
The following is a brief overview of the contributions made by this study.
\begin{itemize}
    \item We proposed a new class of coding scheme equivalent to the PAC codes, consisting of cyclic codes as the inner code and polar- and RM-like codes as the outer code.
    \item We proved that adding an odd number of clockwise cyclic shifts to any row of the polar- and Reed-Muller-like codes generator matrix, added with some rows below it, does not reduce the row's weight.
    \item Using the significant inherent algebraic structure of cyclic codes, we proved that, in terms of minimum distance, PAC codes outperform polar- and RM-like codes.
    \item We proved that the summation of a row of the polar- and RM-like generator matrices with any row below it is equal to some clockwise cyclic shifts of that row.
\end{itemize}

In this work, the vectors are indicated in bold text, and the operations are performed over a binary field $\mathbb{F}_2$.
For vector $\mathbf{c} = (c_1, c_2, \cdots, c_N)$, the notation $\mathbf{c}^i$ is used to express $(c_1, \cdots, c_i)$, and the notation $\mathbf{c}_i^j$ is used to express $(c_i, \cdots, c_j)$.
For a vector $\mathbf{c}$, $(\mathbf{c})_i$ denotes its $i$th element $c_i$.
The polynomial representation of the vector $\mathbf{c} = (c_1, c_2, \cdots, c_N)$ is denoted by $\mathbf{c}(x) = c_{1}+ c_{2}x+ \cdots+ c_{N}x^{N-1}$.

The remainder of this paper is organized as follows.
Section \ref{sec: PAC coding} briefly reviews PAC codes.
Section \ref{sec: weight dsbn} proves that PAC codes can be seen as a concatenation of inner cyclic shift codes and outer polar- and RM-like codes.
Finally, Section \ref{sec: conclusion} concludes this paper.

\section{PAC Coding}\label{sec: PAC coding}

Fig. \ref{fig: PAC_encoder} shows the block diagram of the PAC encoding scheme.
A PAC code is specified by the parameters $(N, K, \mathcal{A}, T)$, where $N$ ($N=2^n$ and $n\in \mathbb{N}$) is the block length of the code word, $K$ is the length of the data word, $\mathcal{A}$ is known as the data index set, and $T$ is an upper triangular Toeplitz matrix. 
The data index set $\mathcal{A}$ defines the rate profile (data insertion) module of the PAC code that maps the data word $\mathbf{d}$ of length $K$ into a data carrier vector $\mathbf{v}$ of length $N$ s.t. $\mathbf{v}_{\mathcal{A}} = \mathbf{d}$ and $\mathbf{v}_{\mathcal{A}^c} = \mathbf{0}$, which is the input to the convolutional encoder defined by the matrix $T$, where

\begin{equation*}
T = 
\begin{bmatrix}
 c_0    & c_1    &  c_2   & \cdots & c_m    & 0      & \cdots & 0      \\
 0      & c_0    & c_1    & c_2    & \cdots & c_m    &        & \vdots \\
 0      & 0      & c_0    & c_1    & \ddots & \cdots & c_m    & \vdots \\
 \vdots & 0      & \ddots & \ddots & \ddots & \ddots &        & \vdots \\
 \vdots &        & \ddots & \ddots & \ddots & \ddots & \ddots & \vdots \\
 \vdots &        &        & \ddots & 0      & c_0    & c_1    & c_2    \\
 \vdots &        &        &        & 0      & 0      & c_0    & c_1    \\
 \vdots & \cdots & \cdots & \cdots & \cdots & 0      & 0      & c_0    
\end{bmatrix},
\end{equation*}
and can be described by a connection polynomial $\mathbf{c}(x) = c_mx^{m}+ \cdots + c_1x + c_0$ s.t. $c_0 = c_m = 1$.
For PAC codes, it is required to construct the set $\mathcal{A}$ based on the polarized cutoff rates to have a low complexity tree search algorithm \cite{moradi2021monte, moradi2023application}.  
The output of the convolutional encoder is obtained as $\mathbf{u} = \mathbf{v}T$.
Then, a one-rate polar transform (polar mapper) is applied to the output of the convolutional encoder as $\mathbf{x} = \mathbf{u}F^{\otimes n}$ to obtain the PAC codeword. 
Note that we employ a separate block for the rate profile, and the selection of the information bits can be based on polar codes, RM codes, or any other arbitrary rate profile (dubbed polar- and RM-like codes).

\begin{figure}[htbp]
\centering
	\includegraphics [width = .8\columnwidth]{./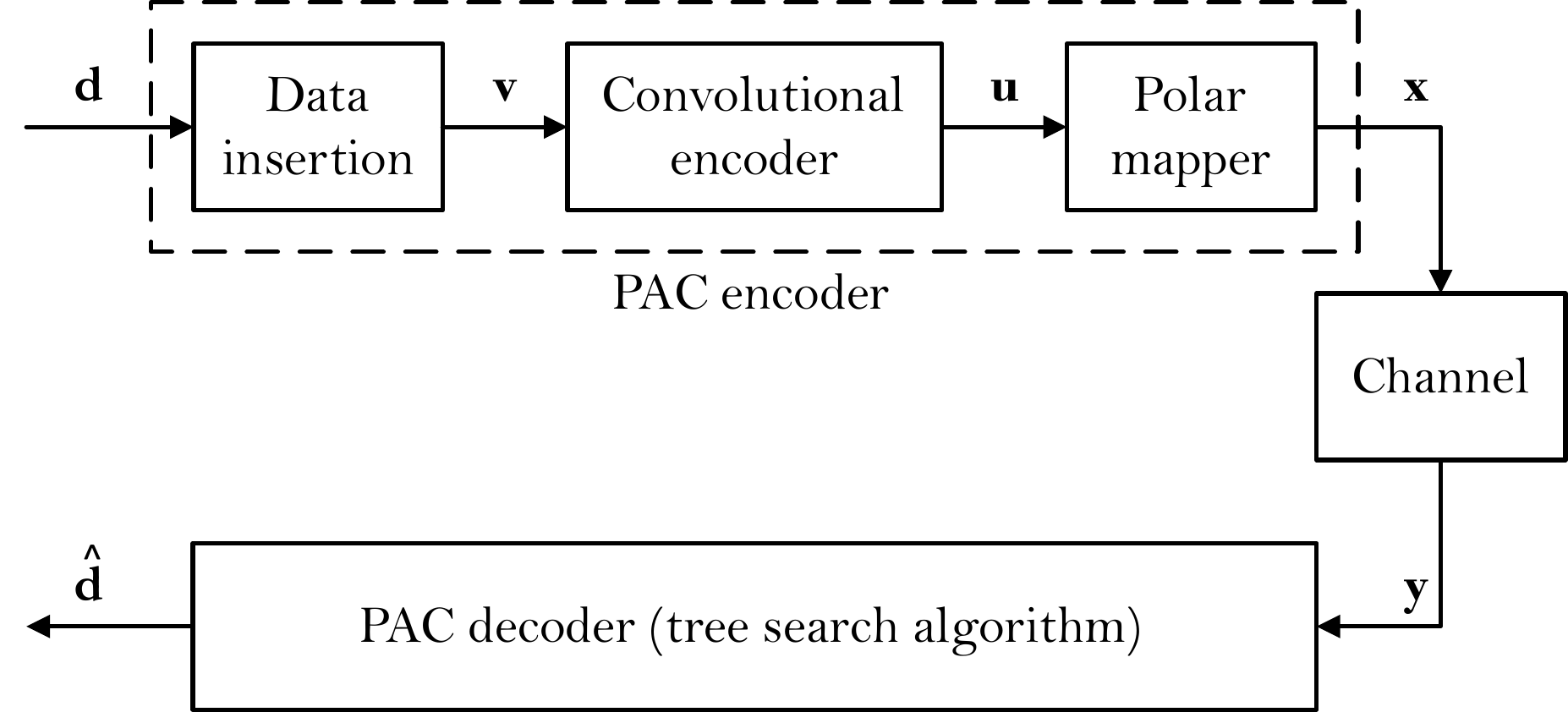}
	\caption{PAC coding scheme.} 
	\label{fig: PAC_encoder}
\end{figure}

On the receiver side, a tree search algorithm produces an estimate $\hat{\mathbf{v}}$ of $\mathbf{v}$ with the help of a metric calculator \cite{moradi2021sequential}.
Finally, an estimate of the data word is extracted from $\hat{\mathbf{v}}$. 

\begin{example}
Consider a PAC(8,3) code where the code blocklength $N$ is 8 and the length of the data bits $K$ is 3.
The polar mapper matrix is
\[
F^{\otimes 3} = \setlength{\arraycolsep}{9pt}
\begin{bmatrix}
1 & 0 & 0 & 0 & 0 & 0 & 0 & 0 \\
1 & 1 & 0 & 0 & 0 & 0 & 0 & 0 \\
1 & 0 & 1 & 0 & 0 & 0 & 0 & 0 \\
1 & 1 & 1 & 1 & 0 & 0 & 0 & 0 \\
1 & 0 & 0 & 0 & 1 & 0 & 0 & 0 \\
1 & 1 & 0 & 0 & 1 & 1 & 0 & 0 \\
1 & 0 & 1 & 0 & 1 & 0 & 1 & 0 \\
1 & 1 & 1 & 1 & 1 & 1 & 1 & 1 \\
\end{bmatrix}.
\]
Let the connection polynomial be $\mathbf{c}(x) = x^6 + x^5 + x^3 + 1$ (151 in octal form).
So, the convolutional encoder matrix is
\[
T = \setlength{\arraycolsep}{9pt}
\begin{bmatrix}
1 & 1 & 0 & 1 & 0 & 0 & 1 & 0 \\
0 & 1 & 1 & 0 & 1 & 0 & 0 & 1 \\
0 & 0 & 1 & 1 & 0 & 1 & 0 & 0 \\
0 & 0 & 0 & 1 & 1 & 0 & 1 & 0 \\
0 & 0 & 0 & 0 & 1 & 1 & 0 & 1 \\
0 & 0 & 0 & 0 & 0 & 1 & 1 & 0 \\
0 & 0 & 0 & 0 & 0 & 0 & 1 & 1 \\
0 & 0 & 0 & 0 & 0 & 0 & 0 & 1 \\
\end{bmatrix}.
\]
Assume that the data index set $\mathcal{A} = \{4,7,8 \}$.
The data insertion block inserts the data vector $\mathbf{d} = (d_1,d_2,d_3)$ into the vector $\mathbf{v}=(v_1, v_2, \cdots , v_8)$ s.t. $\mathbf{v}= (0, 0, 0, d_1, 0, 0, d_2, d_3)$.
The convolutional encoder block generates vector $\mathbf{u}$ from the vector $\mathbf{v}$ as $\mathbf{u}= \mathbf{v}T$.
Finally, the polar transform $\mathbf{x} = \mathbf{u}F^{\otimes 3}$ is computed to finish the encoding process.
\end{example}

\section{PAC Codes v. Polar Codes}\label{sec: weight dsbn}

Assume $\mathbf{g}_i$ be the $i$th row of the matrix $G_n \triangleq F^{\otimes n}$ for $1\leq i \leq N$.
The weight of $\mathbf{g}_i$ is $w(\mathbf{g}_i) = 2^{\sum_{m=1}^{n}i_m}$, where $i_{n}i_{n-1}\cdots i_{1} = \sum_{m =1}^{n}i_{m}2^{m-1}$ is the bit-index representation of index $i-1$ \cite{arikan2009channel}.
Let $C_{N}^{j}$ be the $N\times N$ clockwise cyclic shift matrix at $j$ places st. 
$\mathbf{g}_i C_{N}^{j}= (g_{i,1}, g_{i,2}, \cdots, g_{i,N})C_{N}^{j} = (g_{i,N-j+1}, g_{i,N-j+2}, \cdots g_{i,N}, \cdots, g_{i,N-j})$.
In the polynomial representation, this is equivalent to
\begin{equation}
    x^{j}\mathbf{g_i}(x) = 
    g_{i,N-j+1}+ g_{i,N-j+2}x + \cdots g_{i,N}x^{j-1}+ \cdots+ g_{i,N-j}x^{N-1}~~~ \text{mod}~~x^N-1,
\end{equation}
where $\mathbf{g_i}(x) = g_{i,1}+ g_{i,2}x+ \cdots+ g_{i,N}x^{N-1}$.

\begin{example}
Let $N =8$, $j = 3$, and $i = 6$.

\begin{align*}
C_{8}^{3} = \setlength{\arraycolsep}{9pt}
\begin{bmatrix}
    0 & 0 & 0 & 1 & 0 & 0 & 0 & 0 \\
    0 & 0 & 0 & 0 & 1 & 0 & 0 & 0 \\
    0 & 0 & 0 & 0 & 0 & 1 & 0 & 0 \\
    0 & 0 & 0 & 0 & 0 & 0 & 1 & 0 \\
    0 & 0 & 0 & 0 & 0 & 0 & 0 & 1 \\
    1 & 0 & 0 & 0 & 0 & 0 & 0 & 0 \\
    0 & 1 & 0 & 0 & 0 & 0 & 0 & 0 \\
    0 & 0 & 1 & 0 & 0 & 0 & 0 & 0 \\
\end{bmatrix},
\quad
G_{3} = 
\begin{bmatrix}
    1 & 0 & 0 & 0 & 0 & 0 & 0 & 0 \\
    1 & 1 & 0 & 0 & 0 & 0 & 0 & 0 \\
    1 & 0 & 1 & 0 & 0 & 0 & 0 & 0 \\
    1 & 1 & 1 & 1 & 0 & 0 & 0 & 0 \\
    1 & 0 & 0 & 0 & 1 & 0 & 0 & 0 \\
    1 & 1 & 0 & 0 & 1 & 1 & 0 & 0 \\
    1 & 0 & 1 & 0 & 1 & 0 & 1 & 0 \\
    1 & 1 & 1 & 1 & 1 & 1 & 1 & 1 \\
\end{bmatrix}.
\end{align*}

For $\mathbf{g}_{6} = (1,1,0,0,1,1,0,0)$,
$$\mathbf{g}_{6}C_{8}^{3} = (1,0,0,1,1,0,0,1).$$

\end{example}

For any nonzero binary vector $\mathbf{u}=(0,\cdots,0,1,u_{i+1},u_{i+2},u_{N})$, where the first $1$ is in its $i$th location,
$\mathbf{u}G_{n}= \mathbf{g}_{i}\oplus u_{i+1}\mathbf{g}_{i+1}+\cdots\oplus u_{N}\mathbf{g}_{N}$.
We use the $\underline{\mathbf{g}}_{i}$ notation to show the sum of vector $\mathbf{g}_{i}$ with some other specified rows below it.

In the following proposition, we prove that adding an odd number of clockwise cyclic shifts of any row of the matrix $G_{n}$ that is added with some rows below it (i.e., $\underline{\mathbf{g}}_{i}$ for the $i$th row of the matrix $G_{n}$) cannot decrease the weight of that row. 
In our proof, we partition the generator matrix $G_{n+1}$ into upper and lower parts as 
$G_{n+1} = 
\begin{bmatrix} 
    H_{1}\\H_{2} 
\end{bmatrix}$
s.t. the rows of $H_{1}$ and $H_{2}$ can be represented as $(\mathbf{h}_{i},\mathbf{0})$ and $(\mathbf{h}_{k},\mathbf{h}_{k})$, respectively, where both $\mathbf{h}_{i}$ and $\mathbf{h}_{k}$ are rows of matrix $G_n$.\\

\begin{proposition}\label{pro: 1}
For any vector $(u_{i+1},u_{i+2},\cdots,u_{N})\in \{0,1\}^{N-i}$ and any odd number $l = 1, 3, 5, \cdots$,
\begin{equation}
    w\left(\underline{\mathbf{g}}_{i}\sum C_{N}^l\right) \geq
    w(\mathbf{g}_{i}),
\end{equation}
where $1\leq i \leq N$,
$\underline{\mathbf{g}}_{i} = \mathbf{g}_{i}\oplus u_{i+1}\mathbf{g}_{i+1}+\cdots\oplus u_{N}\mathbf{g}_{N}$,
 and in mod $N$, there are odd numbers of distinct cyclic shifts.
\end{proposition}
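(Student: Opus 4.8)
The plan is to pass to the polynomial ring $R\triangleq\mathbb{F}_2[x]/(x^{N}-1)$. Since $N=2^{n}$, over $\mathbb{F}_2$ one has $x^{N}-1=(x+1)^{N}$, so $R$ is a chain ring: it is local with maximal ideal $(x+1)$, every nonzero $c\in R$ has a $(x+1)$-adic valuation $v(c)\in\{0,1,\dots,N-1\}$ with $c=(x+1)^{v(c)}u$ for a unit $u$, and $c$ is a unit exactly when its representative satisfies $c(1)=1$ (equivalently $v(c)=0$). I will also use the elementary identity $\mathbf{g}_i(x)=(x+1)^{i-1}$ (readily verified from the Kronecker recursion of $F^{\otimes n}$, and consistent with the explicit $G_3$ above), so that $w(\mathbf{g}_i)=w\big((x+1)^{i-1}\big)=2^{\sigma(i-1)}$, where $\sigma(m)=\sum_t m_t$ denotes the number of $1$'s in the binary expansion $m=\sum_t m_t 2^{t}$; note that $(x+1)^{m}=\prod_{t:\,m_t=1}(1+x^{2^{t}})$ expands to exactly $2^{\sigma(m)}$ monomials.

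With this setup the reduction is short. Because $\underline{\mathbf{g}}_i$ adds to $\mathbf{g}_i$ only rows $\mathbf{g}_k$ with $k>i$, we get $\underline{\mathbf{g}}_i(x)=(x+1)^{i-1}+\sum_{k>i}u_k(x+1)^{k-1}=(x+1)^{i-1}p(x)$ with $p(1)=1$, so $p$ is a unit. Next, the cyclic-shift action on any length-$N$ vector is multiplication by the corresponding power of $x$ modulo $x^{N}-1$; hence, writing $J\subseteq\{0,\dots,N-1\}$ for the set of distinct (mod $N$) shift amounts in $\sum C_N^{l}$, the vector $\underline{\mathbf{g}}_i\sum C_N^{l}$ is represented by $q(x)\underline{\mathbf{g}}_i(x)$ with $q(x)=\sum_{j\in J}x^{j}$. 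The hypothesis that the number of distinct cyclic shifts $l=|J|$ is odd says precisely $q(1)=1$, so $q$ is a unit too. Therefore $\underline{\mathbf{g}}_i\sum C_N^{l}$ is represented by $(x+1)^{i-1}U(x)$ with $U=pq$ a unit, hence has $(x+1)$-adic valuation exactly $i-1$. The proposition thus reduces to the following key lemma: \emph{for any $c\in R$ with $v(c)=j$ (where $0\le j\le N-1$) one has $w(c)\ge w\big((x+1)^{j}\big)=2^{\sigma(j)}$.}

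I would prove the lemma by induction on $n$, mirroring the split of $G_{n+1}$ described just before the statement. Write the degree-$<N$ representative of $c$ as $c(x)=c_0(x)+x^{N/2}c_1(x)$ with $\deg c_0,\deg c_1<N/2$, so $w(c)=w(c_0)+w(c_1)$. If $j<N/2$, I apply the ring surjection $\phi\colon R\twoheadrightarrow\mathbb{F}_2[x]/(x^{N/2}-1)$ (well defined since $x^{N}-1=(x^{N/2}-1)^{2}$): here $\phi(c)=c_0+c_1$, the map $\phi$ sends units to units and $(x+1)^{j}$ to an element of valuation $j$, so $v(\phi(c))=j$, while $w(\phi(c))\le w(c_0)+w(c_1)=w(c)$; the induction hypothesis in $\mathbb{F}_2[x]/(x^{N/2}-1)$ then gives $w(c)\ge w(\phi(c))\ge 2^{\sigma(j)}$. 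If instead $j\ge N/2$, then $(x^{N/2}-1)=(x+1)^{N/2}$ divides $c(x)$, so $c(x)=(x^{N/2}-1)h(x)$ with $\deg h<N/2$, whence $c_0=c_1=h$, $w(c)=2w(h)$, and $h$ has valuation $j-N/2$ in the smaller ring; since $N/2\le j<N$ we have $\sigma(j)=1+\sigma(j-N/2)$, so the induction hypothesis gives $w(c)=2w(h)\ge 2\cdot 2^{\sigma(j-N/2)}=2^{\sigma(j)}$. The base case $n=0$ is immediate, which closes the induction; combining with the reduction of the previous paragraph finishes the proof.

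The one genuinely delicate point is picking the inductive formulation so that both halves go through: the ``folding'' surjection $\phi$ only resolves valuations below $N/2$, so the regime $j\ge N/2$ must be treated separately via divisibility by $x^{N/2}-1$ — which is exactly why one decomposes $G_{n+1}$ into the block whose rows are $(\mathbf{h},\mathbf{0})$ and the block whose rows are $(\mathbf{h},\mathbf{h})$, with $\mathbf{h}$ a row of $G_n$. Everything else is bookkeeping; in particular the hypothesis that the number of distinct cyclic shifts is odd enters only through $q(1)=1$, i.e. to guarantee that multiplication by $q$ preserves the $(x+1)$-adic valuation, so that $\underline{\mathbf{g}}_i\sum C_N^{l}$ stays in the layer $\{c:v(c)=i-1\}$ to which the lemma applies — this is also where an even number of shifts would break the bound.
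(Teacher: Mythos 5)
Your proof is correct, and it takes a genuinely different route from the paper's. The paper proves the proposition directly by induction on $n$, carrying the sum of cyclic shifts through the induction: it splits the rows of $G_{n+1}$ into those of the form $(\mathbf{h},\mathbf{h})$ (where the weight simply doubles) and those of the form $(\mathbf{h},\mathbf{0})$ (where a bitwise four-case analysis shows that folding the two halves of $\underline{\mathbf{g}}_{i}\sum C_{2N}^{l}$ onto each other cannot increase the weight), and it must separately track that the number of distinct shifts remains odd after reduction mod $N$ (the remark about $l$ and $l+N$ colliding in pairs). You instead pass to the chain ring $\mathbb{F}_2[x]/(x+1)^{N}$, use $\mathbf{g}_i(x)=(x+1)^{i-1}$, and absorb both the added rows below and the odd collection of shifts into a single unit factor, reducing everything to the valuation lemma $v(c)=j\Rightarrow w(c)\ge 2^{\sigma(j)}$ (with $\sigma(j)$ the number of ones in the binary expansion of $j$). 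Your two inductive cases --- $j<N/2$ via the folding surjection onto $\mathbb{F}_2[x]/(x^{N/2}-1)$, and $j\ge N/2$ via divisibility by $x^{N/2}-1$ --- are the exact algebraic counterparts of the paper's Cases 2 and 1, so the combinatorial core is the same; but your formulation eliminates the shift bookkeeping, makes transparent exactly where the oddness hypothesis enters (it is precisely $q(1)=1$, so that multiplication by $q$ preserves the valuation), and isolates a reusable statement: the lemma is equivalent to the cited result of Li et al.\ that $w(\underline{\mathbf{g}}_{i})\ge w(\mathbf{g}_{i})$, of which the full proposition then becomes an immediate corollary. The only cost is the reliance on the identity $\mathbf{g}_i(x)=(x+1)^{i-1}$ and the local-ring structure of $\mathbb{F}_2[x]/(x^{N}-1)$, which the paper never invokes but which also explains its later concatenation with cyclic shift codes.
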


\begin{proof}
The proof is by induction on $n$. 
For $n = 1$, the statement obviously holds.
By induction, suppose that the statement holds for $n$. 
We want to show that for any row $\mathbf{g}_{i}$ of $G_{n+1}$,
\begin{equation}
    w\left(\underline{\mathbf{g}}_{i}\sum C_{2N}^l\right) \geq
    w(\mathbf{g}_{i}),
\end{equation}
where $1\leq i \leq 2N$.
We divide the proof into two cases based on whether the row index $i$ is bigger than $N$.

\textbf{Case 1:} $i > N$. 

In this case, we can write the the vector $\underline{\mathbf{g}}_{i}$ as $(\underline{\mathbf{h}}_{i-N},\underline{\mathbf{h}}_{i-N})$, where $\mathbf{h}_{i-N}$ is the $(i-N)$th row of matrix $G_n$.
Hence,
\begin{equation}
\begin{split}
    &w\left(\underline{\mathbf{g}}_{i}\sum C_{2N}^l\right)
    = w\left( (\underline{\mathbf{h}}_{i-N},\underline{\mathbf{h}}_{i-N})\sum C_{2N}^l \right) \\
    & =w\left(
    \underline{\mathbf{h}}_{i-N}\sum C_{N}^l
    ,\underline{\mathbf{h}}_{i-N}\sum C_{N}^l
    \right) \\
    &= 2w\left(
    \underline{\mathbf{h}}_{i-N}\sum C_{N}^l
    \right) \geq
    2w(\mathbf{h}_{i-N})\\
    &= w(\mathbf{h}_{i-N},\mathbf{h}_{i-N}) = w(\mathbf{g}_{i}),
\end{split}
\end{equation}
where the inequality is by induction.
Notice that in the third equality, it is just possible for an even number of shifts to get equal ($l$ and $l+N$), and again the inequality will hold.
\\

\textbf{Case 2:} $i \leq N$.

In this case, the $i$th row of the matrix $G_{n+1}$ can be written as $\mathbf{g}_{i} = (\mathbf{h},\mathbf{0})$ s.t. $\mathbf{h}$ is the $i$th row of matrix $G_n$.
We have
\begin{equation}
    \underline{\mathbf{g}}_{i} =
    (\underline{\mathbf{h}},\mathbf{0}) \oplus
    (\underline{\mathbf{y}},
    \underline{\mathbf{y}}),
\end{equation}
where $(\mathbf{y},\mathbf{y})$ is a zero vector or is the $j$th row of matrix $G_{n+1}$ s.t. $j > N$.
We have
\begin{equation}
\begin{split}
    \underline{\mathbf{g}}_{i}\sum C_{2N}^l 
    &=
    (\underline{\mathbf{h}},\mathbf{0})\sum C_{2N}^l
    \oplus
    (\underline{\mathbf{y}},
    \underline{\mathbf{y}})
    \sum C_{2N}^l\\
    &=
    (\underline{\mathbf{h}},\mathbf{0})\sum C_{2N}^l
    \oplus
    \left(\underline{\mathbf{y}}\sum C_{N}^l,
    \underline{\mathbf{y}}\sum C_{N}^l\right).
\end{split}
\end{equation}
Suppose $\gamma \triangleq \left(\underline{\mathbf{y}}\sum C_{N}^l \right)_{k}$.
Also, note that the $k$th bit
\begin{equation}
    \left(\underline{\mathbf{h}}\sum C_{N}^l
    \right)_{k} =
    \underbrace{
    \left( 
    (\underline{\mathbf{h}},\mathbf{0})\sum C_{2N}^l
    \right)_{k}
    }_{\alpha}
    \oplus
    \underbrace{
    \left( 
    (\underline{\mathbf{h}},\mathbf{0})\sum C_{2N}^l
    \right)_{k+N}
    }_{\beta}
\end{equation}
is equal to 1 if 
$(\alpha, \beta) = (1,0)$ or $(\alpha, \beta) = (0,1)$ and we can have one of the following four cases:

If $\gamma = 0$ and $(\alpha, \beta) = (0,1)$, then 
$\left(\underline{\mathbf{g}}_{i}\sum C_{2N}^l \right)_{k+N} =1.$

If $\gamma = 1$ and $(\alpha, \beta) = (0,1)$, then 
$\left(\underline{\mathbf{g}}_{i}\sum C_{2N}^l \right)_{k} =1.$

If $\gamma = 0$ and $(\alpha, \beta) = (1,0)$, then 
$\left(\underline{\mathbf{g}}_{i}\sum C_{2N}^l \right)_{k} =1.$

If $\gamma = 1$ and $(\alpha, \beta) = (1,0)$, then 
$\left(\underline{\mathbf{g}}_{i}\sum C_{2N}^l \right)_{k+N} =1.$

\noindent So, 
\begin{equation}
    w(\underline{\mathbf{g}}_{i}\sum C_{2N}^l) \geq w(\underline{\mathbf{h}}\sum C_{N}^l) \geq
    w(\mathbf{h}) = w(\mathbf{g}_{i}),
\end{equation}
where the last inequality is by induction.
\end{proof}

The following lemma is useful in proving the Theorem \ref{theorem}.
\begin{lemma}\label{lemma identityShift}
For $1\leq m \leq N-1$ and $1\leq k \leq N$,
if $k+m > N$,
\begin{equation}
    \mathbf{g}_{k} = \mathbf{g}_{k}
    \sum_{\mycom{1\leq l \leq N-1~\text{s.t.}}{g_{m+1,\R{l+1}}=1}}
    C_{N}^{l},
\end{equation}
where $\mathbf{g}_{k}$ and $\mathbf{g}_{m+1}$ are the $k$th and $(m+1)$th rows of the matrix $G_{n}$.
\end{lemma}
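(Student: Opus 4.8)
\emph{Proof proposal.} The plan is to pass to the polynomial picture and reduce the claim to an elementary divisibility statement about powers of $1+x$ over $\mathbb{F}_2$.

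First I would record the algebraic fact underlying everything: for every $1\le i\le N$ the $i$th row of $G_n$, read as a polynomial, is $\mathbf{g}_i(x)=(1+x)^{i-1}$. This holds because $(G_n)_{i,j}=\binom{i-1}{j-1}\bmod 2$ — apply Lucas' theorem to the Kronecker structure of $F^{\otimes n}$ — so $\mathbf{g}_i(x)=\sum_{j=1}^{N}\big(\binom{i-1}{j-1}\bmod 2\big)x^{j-1}=(1+x)^{i-1}$, the coefficients with $j-1>i-1$ vanishing; alternatively one can induct on $n$ using the block decomposition $G_{n+1}=\begin{bsmallmatrix}G_n & 0\\ G_n & G_n\end{bsmallmatrix}$ already set up above, noting that $(\mathbf{h},\mathbf{0})$ carries the same polynomial as $\mathbf{h}$ while $(\mathbf{h},\mathbf{h})$ carries $(1+x^{N})\mathbf{h}(x)=(1+x)^{N}\mathbf{h}(x)$. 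In particular $\deg\mathbf{g}_k(x)=k-1<N$, so reducing $\mathbf{g}_k(x)$ modulo $x^{N}-1$ changes nothing.

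Next I would rewrite the right-hand side. By the correspondence $\mathbf{g}_k C_N^{l}\leftrightarrow x^{l}\mathbf{g}_k(x)\bmod(x^{N}-1)$ recalled above, and because the first column of $G_n$ is all-ones so that $g_{m+1,1}=1$,
\[
\sum_{\substack{1\le l\le N-1\\ g_{m+1,l+1}=1}} x^{l}=\mathbf{g}_{m+1}(x)-g_{m+1,1}=(1+x)^{m}+1 \quad\text{over }\mathbb{F}_2 ,
\]
so the right-hand side of the lemma equals $\mathbf{g}_k(x)\big((1+x)^{m}+1\big)=(1+x)^{k+m-1}+(1+x)^{k-1}$ modulo $x^{N}-1$.

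Finally I would invoke the hypothesis. Since $N=2^{n}$, over $\mathbb{F}_2$ we have $x^{N}-1=(1+x)^{N}$, and $k+m>N$ gives $k+m-1\ge N$, hence $(1+x)^{N}$ divides $(1+x)^{k+m-1}$ and that term dies modulo $x^{N}-1$. What survives is $(1+x)^{k-1}=\mathbf{g}_k(x)$, and because its degree is below $N$ the congruence is an honest equality of length-$N$ vectors — which is exactly the assertion. The one place to be careful is the bookkeeping around the \emph{excluded} shift $l=0$: dropping it replaces $\mathbf{g}_k(x)\mathbf{g}_{m+1}(x)$ by $\mathbf{g}_k(x)\mathbf{g}_{m+1}(x)+\mathbf{g}_k(x)$, and it is precisely the extra $+\mathbf{g}_k(x)$ that remains once the $(1+x)^{k+m-1}$ term is annihilated; proving $\mathbf{g}_i(x)=(1+x)^{i-1}$ cleanly (and checking the degree bound that rules out wraparound of $\mathbf{g}_k$ itself) is the other routine-but-needed ingredient.
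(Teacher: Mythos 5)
Your proof is correct, but it takes a genuinely different route from the paper. The paper proves the lemma by induction on $n$, splitting $G_{n+1}$ into its upper block $(\mathbf{h},\mathbf{0})$ and lower block $(\mathbf{h},\mathbf{h})$ and checking the cases $k>N$ and $k\le N$ separately (the latter via the identity $(\mathbf{x},\mathbf{0})(C_{2N}^{l}\oplus C_{2N}^{l+N})=(\mathbf{x}C_N^l,\mathbf{x}C_N^l)$). You instead work entirely in $\mathbb{F}_2[x]/(x^N-1)$, using the closed form $\mathbf{g}_i(x)=(1+x)^{i-1}$ (which does follow from Lucas' theorem or from the same block recursion, and is consistent with the paper's weight formula $w(\mathbf{g}_i)=2^{\sum i_m}$), so that the right-hand side becomes $(1+x)^{k+m-1}+(1+x)^{k-1}$ and the hypothesis $k+m>N$ kills the first term because $x^N-1=(1+x)^N$ over $\mathbb{F}_2$. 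Your bookkeeping of the excluded shift $l=0$ and of the degree bound $\deg\mathbf{g}_k=k-1<N$ is exactly the care the argument needs. What your approach buys: it is shorter, avoids the case analysis, and immediately yields the companion identity \eqref{eq: theoremSum} of Theorem \ref{theorem} as well, since for $i=k+m\le N$ the surviving polynomial $(1+x)^{k-1}+(1+x)^{i-1}=\mathbf{g}_k(x)+\mathbf{g}_i(x)$ requires no reduction modulo $x^N-1$; in effect the whole of Theorem \ref{theorem} reduces to the single observation that multiplication by $\mathbf{g}_{m+1}(x)$ realizes $D_N^m$ on the row space. What the paper's approach buys is that it stays inside the recursive Kronecker/block structure used throughout (e.g., in Proposition \ref{pro: 1}), so the same machinery carries over to the weight inequality, which is not a purely algebraic statement and does not follow from the polynomial identity alone.
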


\begin{example}
For $n = 3$, $k = 6$, and $m = 3$, we have $m + k =9 > N$. 
So, for $\mathbf{g}_{k} = (1,1,0,0,1,1,0,0)$
and
$\mathbf{g}_{m+1} = (1,\R{1},\R{1},\R{1},0,0,0,0)$
we have $\mathbf{g}_{k} = \mathbf{g}_{k}(C^{\R{1}} \oplus C^{\R{2}} \oplus C^{\R{3}})$.\\
\end{example}

\begin{proof}
The proof is by induction. Suppose that Lemma \ref{lemma identityShift} is true for $n$, and we prove it for $n+1$ case.
We divide the proof into two cases based on whether $k$ exceeds $N$.

\textbf{Case 1:} $k > N$. \\
The $k$ th row of matrix $G_{n+1}$ can be written as
$\mathbf{g}_k = (\mathbf{h}_{k-N}, \mathbf{h}_{k-N})$,
where $\mathbf{h}_{k-N}$ is the $(k-N)$th row of matrix $G_{n}$.
If $k+m > 2N$, then $k+m-N > N$ and by induction we have that
\begin{equation}
    \mathbf{h}_{k-N} = \mathbf{h}_{k-N} 
    \sum_{\mycom{1\leq l \leq N-1~\text{s.t.}}{h_{m+1,l+1}=1}}
    C_{N}^{l}.
\end{equation}
From this, we have
\begin{equation}
\begin{split}
    &\mathbf{g}_k
    \sum_{\mycom{1\leq l \leq 2N-1~\text{s.t.}}{g_{m+1,l+1}=1}}
    C_{2N}^{l}\\
    & = (\mathbf{h}_{k-N}, \mathbf{h}_{k-N})
    \sum_{\mycom{1\leq l \leq 2N-1~\text{s.t.}}{g_{m+1,l+1}=1}}
    C_{2N}^{l} \\
    &= (\mathbf{h}_{k-N}
    \sum_{\mycom{1\leq l \leq N-1~\text{s.t.}}{h_{m+1,l+1}=1}}
    C_{N}^{l}, \mathbf{h}_{k-N}
    \sum_{\mycom{1\leq l \leq N-1~\text{s.t.}}{h_{m+1,l+1}=1}}
    C_{N}^{l})\\
    & = (\mathbf{h}_{k-N}, \mathbf{h}_{k-N})
    = \mathbf{g}_k.
\end{split}
\end{equation}

\textbf{Case 2:} $k \leq N$. \\
Suppose that $\mathbf{g}_{k} = (\mathbf{h}_k,\mathbf{0})$ is the $k$th row of matrix $G_{n+1}$ and $k+m > 2N$, where $\mathbf{h}_{k}$ is the $k$th row of matrix $G_{n}$.
Because $k\leq N$, we can write $m$ as $m = j+N$ s.t. $j<N$.
From $j+N+k > 2N$, we have $j+k > N$ and by induction we have
\begin{equation}
    \mathbf{h}_{k} = \mathbf{h}_{k}
    \sum_{\mycom{1\leq l \leq N-1~\text{s.t.}}{h_{j+1,l+1}=1}}
    C_{N}^{l}.
\end{equation}
Also note that for any vector $\mathbf{x}$ of length $N$ and $1\leq l \leq N - 1$ we have
\begin{equation}\label{eq: (x,0)}
    (\mathbf{x},0)(C_{2N}^{l} \oplus C_{2N}^{l+N}) = 
    (\mathbf{x}C_{N}^{l},\mathbf{x}C_{N}^{l}).
\end{equation}
As an example, for $N=8$ and $l=3$, 
\begin{equation*}
\begin{split}
&(\mathbf{x},0)(C_{2N}^{l} \oplus C_{2N}^{l+N}) \\
&= ( x_1,x_2,x_3,x_4,x_5,x_6,x_7,x_8,0,0,0,0,0,0,0,0 ) C_{16}^{3}\\
&+ ( x_1,x_2,x_3,x_4,x_5,x_6,x_7,x_8,0,0,0,0,0,0,0,0 ) C_{16}^{11}\\
&= ( 0,0,0,x_1,x_2,x_3,x_4,x_5,x_6,x_7,x_8,0,0,0,0,0 ) \\
&+ ( x_6,x_7,x_8,0,0,0,0,0,0,0,0,x_1,x_2,x_3,x_4,x_5 )\\
&= ( x_6,x_7,x_8,x_1,x_2,x_3,x_4,x_5,x_6,x_7,x_8,x_1,x_2,x_3,x_4,x_5 ) \\
&=(\mathbf{x}C_{N}^{l},\mathbf{x}C_{N}^{l}).
\end{split}
\end{equation*}

So we have
\begin{equation}
\begin{split}
    &\mathbf{g}_{k}
    \sum_{\mycom{1\leq l \leq 2N-1~\text{s.t.}}{g_{m+1,l+1}=1}}
    C_{2N}^{l}
    = (\mathbf{h}_{k},\mathbf{0})
    \sum_{\mycom{1\leq l \leq N-1~\text{s.t.}}{(h_{j+1,l+1},h_{j+1,l+1})=(1,1)}}
    C_{2N}^{l}\\
    & =(\mathbf{h}_{k}
    \sum_{\mycom{1\leq l \leq N-1~\text{s.t.}}{h_{j+1,l+1}=1}}
    C_{N}^{l},
    \mathbf{h}_{k}
    \sum_{\mycom{1\leq l \leq N-1~\text{s.t.}}{h_{j+1,l+1}=1}}
    C_{N}^{l})\\
    & =(\mathbf{h}_{k},\mathbf{h}_{k}) 
    = \mathbf{g}_{k},
\end{split}
\end{equation}
where the second equality is by \eqref{eq: (x,0)} and the third equality is by induction.
\end{proof}

Let us define an $N\times N$ upper-triangular bidiagonal matrix $D_{N}$ as
\begin{equation}
    D_{N} \triangleq 
    \begin{bmatrix}
    1      \ & 1      \ & 0         \ & \ldots      \ & 0        \\
    0      \ & 1      \ & 1         \ &             \ & \vdots   \\
    \vdots \ & 0      \ & \ddots    \ & \ddots      \ &          \\
           \ & \vdots \ &           \ &  1          \ & 1        \\
    0      \ & 0      \ & \ldots    \ &             \ & 1 
\end{bmatrix},
\end{equation}
which is an upper-triangular Toeplitz matrix that has 1 in its main diagonal and upper diagonal elements (its first row has 1 in the first and second positions), with all other entries being zero.

We also define matrix $D_{N}^{m}$ as an upper-triangular Toeplitz matrix that its first row has 1 in the first and $(m+1)$th positions, where $1\leq m \leq N-1$, and we define $D_{N}^{0}$ as the identity matrix.

The following theorem relates the matrix $D_{N}^m$ to clock-wise cyclic shift matrices $C_{N}^{l}$.
Note that the relation is trivial when $m=0$; both matrices multiplied by $G_n$ are identity matrices.

\begin{theorem}\label{theorem}
For any $m \leq N-1$,
\begin{equation}
    D_{N}^{m}G_{n} = G_{n}
    \sum_{\mycom{1\leq l \leq N-1~\text{s.t.}}{g_{m+1,l+1}=1}} C_{N}^{l}.
\end{equation}
This is equivalent to saying that if $i = k + m \leq N$, then
\begin{equation}\label{eq: theoremSum}
    \mathbf{g}_k \oplus \mathbf{g}_i = \mathbf{g}_{k}
    \sum_{\mycom{1\leq l \leq N-1~\text{s.t.}}{g_{i-k+1,l+1}=1}} C_{N}^{l},
\end{equation}
and if $k+m > N$, then
\begin{equation}\label{eq: identityShift}
    \mathbf{g}_{k} = \mathbf{g}_{k}
    \sum_{\mycom{1\leq l \leq N-1~\text{s.t.}}{g_{m+1,l+1}=1}} C_{N}^{l}
\end{equation}
\end{theorem}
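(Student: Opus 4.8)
The plan is to read the claimed matrix identity one row at a time, recognize that the ``$k+m>N$'' rows are exactly Lemma~\ref{lemma identityShift}, and prove the ``$k+m\le N$'' rows by passing to the polynomial representation, where the whole statement collapses to the single elementary fact that $\mathbf{g}_i(x)=(1+x)^{i-1}$ over $\mathbb{F}_2$.

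First I would unpack the matrix equation. Since $D_N^m$ is upper-triangular Toeplitz with its first row supported on positions $1$ and $m+1$, its $k$th row equals $\mathbf{e}_k\oplus\mathbf{e}_{k+m}$ when $k+m\le N$ and $\mathbf{e}_k$ otherwise, so the $k$th row of $D_N^m G_n$ is $\mathbf{g}_k\oplus\mathbf{g}_{k+m}$ or $\mathbf{g}_k$ accordingly; meanwhile the $k$th row of $G_n\sum_{l:g_{m+1,l+1}=1}C_N^l$ is $\mathbf{g}_k\sum_{l:g_{m+1,l+1}=1}C_N^l$. Writing $i=k+m$ (so $m+1=i-k+1$), the matrix identity is therefore equivalent to \eqref{eq: theoremSum} for every $k$ with $k+m\le N$ together with \eqref{eq: identityShift} for every $k$ with $k+m>N$; the latter family is precisely Lemma~\ref{lemma identityShift}, so only \eqref{eq: theoremSum} remains.

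For \eqref{eq: theoremSum} I would first note that $F$, hence $G_n=F^{\otimes n}$, is lower triangular with all-ones diagonal and all-ones first column, so $\mathbf{g}_i(x)$ has degree exactly $i-1$, $g_{m+1,1}=1$, and in fact $\mathbf{g}_i(x)=\prod_{j:\,i_j=1}(1+x^{2^{j-1}})=(1+x)^{i-1}$ in $\mathbb{F}_2[x]$ (a one-line induction on $n$ using the recursion for $G_{n+1}$ and $(1+x)^{2^{j}}=1+x^{2^{j}}$). Since $C_N^l$ acts on polynomials as multiplication by $x^l \bmod (x^N-1)$ and $g_{m+1,1}=1$, we have $\sum_{1\le l\le N-1,\,g_{m+1,l+1}=1}x^l=\mathbf{g}_{m+1}(x)+1=(1+x)^m+1$, so the polynomial of the right side of \eqref{eq: theoremSum} is
\begin{equation*}
\mathbf{g}_k(x)\bigl((1+x)^m+1\bigr)=(1+x)^{k+m-1}+(1+x)^{k-1}\pmod{x^N-1}.
\end{equation*}
Because $k+m\le N$ both exponents are at most $N-1$, so no reduction occurs and the right side equals $\mathbf{g}_{k+m}(x)+\mathbf{g}_k(x)$, the polynomial of $\mathbf{g}_k\oplus\mathbf{g}_{k+m}$; this proves \eqref{eq: theoremSum} and hence the theorem. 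As a bonus the same computation reproves \eqref{eq: identityShift}: when $k+m>N$ one has $k+m-1\ge N$, and since $x^N-1=(1+x)^N$ over $\mathbb{F}_2$ for $N=2^n$, the term $(1+x)^{k+m-1}$ vanishes modulo $x^N-1$, leaving $\mathbf{g}_k(x)$.

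The only genuine content is the identity $\mathbf{g}_i(x)=(1+x)^{i-1}$ (equivalently $x^N-1=(1+x)^N$ for $N=2^n$); once it is in hand the rest is bookkeeping, so that is the step I would be most careful to justify cleanly. If one instead wishes to stay within the inductive style of the proof of Lemma~\ref{lemma identityShift}, \eqref{eq: theoremSum} can also be obtained by induction on $n$: split on whether $k>N$ (where $\mathbf{g}_k$ and $\mathbf{g}_{k+m}$ are both ``doubled'' rows $(\mathbf{h},\mathbf{h})$ and one reduces directly to the hypothesis) and, for $k\le N$, on whether $m\ge N$ (write $m=j+N$ with $j<N$ and use \eqref{eq: (x,0)}) or $m<N$ (where the shifts $C_{2N}^l$ with $l\le m<N$ keep the support inside the first block, because $\deg\mathbf{h}_k(x)=k-1\le N-m-1$ rules out wraparound); this works but needs exactly the degree accounting that the polynomial argument makes transparent.
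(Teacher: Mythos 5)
Your proof is correct, but it takes a genuinely different and considerably shorter route than the paper's. The paper proves \eqref{eq: identityShift} via Lemma~\ref{lemma identityShift} (an induction with two cases) and \eqref{eq: theoremSum} via a separate induction on $n$ with six cases, each involving careful block-decomposition and ``shift-by-a-vector'' bookkeeping to track where the support of $\mathbf{h}_k$ lands. You instead observe that everything lives in the group algebra $\mathbb{F}_2[x]/(x^N-1)$, where $C_N^l$ is multiplication by $x^l$, and reduce the whole theorem to the single identity $\mathbf{g}_i(x)=(1+x)^{i-1}$ (a one-line induction using $G_{n+1}=\begin{bsmallmatrix} G_n & 0\\ G_n & G_n\end{bsmallmatrix}$ and $(1+x)^N=1+x^N$). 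Then $\mathbf{g}_k(x)\bigl((1+x)^m+1\bigr)=(1+x)^{k+m-1}+(1+x)^{k-1}$, and the two cases of the theorem fall out of whether $(1+x)^{k+m-1}$ survives reduction modulo $x^N-1=(1+x)^N$: if $k+m\le N$ it is $\mathbf{g}_{k+m}(x)$ verbatim, and if $k+m>N$ it is divisible by $(1+x)^N$ and vanishes, which also reproves Lemma~\ref{lemma identityShift} for free. I checked the details: the row-wise unpacking of $D_N^m G_n$ is right, $g_{m+1,1}=1$ holds (all-ones first column), the degree bound $\deg\mathbf{g}_k=k-1$ justifies ``no reduction occurs,'' and the factorization $x^N-1=(1+x)^N$ is exactly where $N=2^n$ enters. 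What your approach buys is a unified, transparent explanation of \emph{why} the shift set is indexed by the support of row $m+1$ and why the identity degenerates past the boundary $k+m=N$; what the paper's approach buys is that it stays entirely within the block-recursive vector manipulations already set up for Proposition~\ref{pro: 1}, at the cost of substantial case analysis. (One shared edge case: at $m=0$ the shift set is empty and the stated right-hand side is zero rather than $G_n$; your formula $(1+x)^0+1=0$ makes this explicit, so the theorem should be read as requiring $m\ge 1$, consistent with the paper's separate remark about $m=0$.)
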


\begin{example}
For $n = 2$, 
\begin{equation*}
 G_{n} = 
    \begin{bmatrix}
    1      \ & 0      \ & 0         \ & 0     \\
    1      \ & 1      \ & 0         \ & 0     \\
    1      \ & 0      \ & 1         \ & 0     \\
    1      \ & 1      \ & 1         \ & 1       
\end{bmatrix},   
D^2 = 
    \begin{bmatrix}
    1      \ & 0      \ & 1         \ & 0     \\
    0      \ & 1      \ & 0         \ & 1     \\
    0      \ & 0      \ & 1         \ & 0     \\
    0      \ & 0      \ & 0         \ & 1       
\end{bmatrix},
\end{equation*}
\begin{equation*}
    C^2 = 
    \begin{bmatrix}
    0      \ & 0      \ & 1         \ & 0     \\
    0      \ & 0      \ & 0         \ & 1     \\
    1      \ & 0      \ & 0         \ & 0     \\
    0      \ & 1      \ & 0         \ & 0       
    \end{bmatrix} ,
\end{equation*}
\begin{equation}
D^{2}G_{n} = G_{n}C^{2} = 
    \begin{bmatrix}
    0      \ & 0      \ & 1         \ & 0     \\
    0      \ & 0      \ & 1         \ & 1     \\
    1      \ & 0      \ & 1         \ & 0     \\
    1      \ & 1      \ & 1         \ & 1       
\end{bmatrix}.
\end{equation}
\end{example}

\begin{example}
For $n = 2$, 
\begin{equation*}
 G_{n} = 
    \begin{bmatrix}
    1      \ & 0      \ & 0         \ & 0     \\
    1      \ & 1      \ & 0         \ & 0     \\
    1      \ & 0      \ & 1         \ & 0     \\
    1      \ & 1      \ & 1         \ & 1       
\end{bmatrix},\    
D^3 = 
    \begin{bmatrix}
    1      \ & 0      \ & 0         \ & 1     \\
    0      \ & 1      \ & 0         \ & 0     \\
    0      \ & 0      \ & 1         \ & 0     \\
    0      \ & 0      \ & 0         \ & 1       
\end{bmatrix},
\end{equation*}
\begin{equation*}
C^1 = 
    \begin{bmatrix}
    0      \ & 1      \ & 0         \ & 0     \\
    0      \ & 0      \ & 1         \ & 0     \\
    0      \ & 0      \ & 0         \ & 1     \\
    1      \ & 0      \ & 0         \ & 0       
\end{bmatrix} ,\
C^2 = 
    \begin{bmatrix}
    0      \ & 0      \ & 1         \ & 0     \\
    0      \ & 0      \ & 0         \ & 1     \\
    1      \ & 0      \ & 0         \ & 0     \\
    0      \ & 1      \ & 0         \ & 0       
\end{bmatrix}, \
C^3 = 
    \begin{bmatrix}
    0      \ & 0      \ & 0         \ & 1     \\
    1      \ & 0      \ & 0         \ & 0     \\
    0      \ & 1      \ & 0         \ & 0     \\
    0      \ & 0      \ & 1         \ & 0       
    \end{bmatrix} ,
\end{equation*}
\begin{equation}
D^{3}G_{n} = G_{n}(C^{1} \oplus C^{2} \oplus C^{3}) = 
    \begin{bmatrix}
    0      \ & 1      \ & 1         \ & 1     \\
    1      \ & 1      \ & 0         \ & 0     \\
    1      \ & 0      \ & 1         \ & 0     \\
    1      \ & 1      \ & 1         \ & 1       
\end{bmatrix}.
\end{equation}
\end{example}

\begin{proof}
The Lemma \ref{lemma identityShift} proves the equation \eqref{eq: identityShift}, and we provide proof for the case $i =k+m \leq N$. 
The proof is by induction, and we assume that \eqref{eq: theoremSum} is true for $G_{n}$; then, we prove it for $G_{n+1}$.
In this respect, we have $i = k+m \leq 2N$.
We divide the proof into 6 cases based on the values of $i$ and $k$.
The first case is for $k > N$, the second is when $i,k \leq N$, and the other four are when $k\leq N$ and ${i} > N$.

\textbf{Case 1:} $k > N$. \\
We have 
\begin{equation}
    \mathbf{g}_{k} \oplus \mathbf{g}_{i} = 
    (\mathbf{h}_{k-N}\oplus \mathbf{h}_{i-N}, \mathbf{h}_{k-N} \oplus \mathbf{h}_{i-N}),
\end{equation}
where $\mathbf{h}_{k-N}$ and $\mathbf{h}_{i-N}$ are $(k-N)$th and $(i-N)$th rows of matrix $G_{N}$, respectively.
By induction, we have
\begin{equation}
\begin{split}
    & \mathbf{g}_{k} \oplus \mathbf{g}_{i} = (\mathbf{h}_{k-N}\oplus \mathbf{h}_{i-N}, \mathbf{h}_{k-N}\oplus \mathbf{h}_{i-N}) \\
    & = (\mathbf{h}_{k-N}
    \sum_{\mycom{1\leq l \leq N-1~\text{s.t.}}{h_{i-k+1,l+1}=1}}
    C_{N}^{l}, \mathbf{h}_{k-N}
    \sum_{\mycom{1\leq l \leq N-1~\text{s.t.}}{h_{i-k+1,l+1}=1}}
    C_{N}^{l}) \\
    & = (\mathbf{h}_{k-N},\mathbf{h}_{k-N})
    \sum_{\mycom{1\leq l \leq 2N-1~\text{s.t.}}{g_{i-k+1,l+1}=1}}
    C_{2N}^{l} \\
    &= \mathbf{g}_{k} 
    \sum_{\mycom{1\leq l \leq 2N-1~\text{s.t.}}{g_{i-k+1,l+1}=1}}
    C_{2N}^{l}.
\end{split}
\end{equation}
As an example for $N =8$, $k = 12$, and $i = 15$, the $12$th and $15$th rows of matrix $G_{4}$ are
\begin{equation*}
\begin{split}
    &\mathbf{g}_{12} = (\mathbf{h}_{4}, \mathbf{h}_{4}
    ) = (1,1,1,1,0,0,0,0,1,1,1,1,0,0,0,0),\\
    & \mathbf{g}_{15} = (\mathbf{h}_{7}, \mathbf{h}_{7}
    ) = (1,0,1,0,1,0,1,0,1,0,1,0,1,0,1,0).
\end{split}
\end{equation*}
By knowing that $i-k+1=4$, we have
\begin{equation*}
    \mathbf{g}_{12}\oplus \mathbf{g}_{15} = \mathbf{g}_{12}(C_{16}^{1}\oplus C_{16}^{2}\oplus C_{16}^{3}).
\end{equation*}


\textbf{Case 2:} $k \leq N$ and $i\leq N$.\\
The $k$th and $i$th rows of the matrix $G_{n+1}$ can be written as 
$\mathbf{g}_{k} = (\mathbf{h}_{k}, \mathbf{0})$ and 
$\mathbf{g}_{i} = (\mathbf{h}_{i}, \mathbf{0})$, 
respectively, where $\mathbf{h}_{k}$ and $\mathbf{h}_{i}$ are the corresponding rows of the matrix $G_{n}$.
By induction, we have
\begin{equation}
    \mathbf{h}_k \oplus \mathbf{h}_i = \mathbf{h}_{k}
    \sum_{\mycom{1\leq l \leq N-1~\text{s.t.}}{h_{i-k+1,l+1}=1}}
    C_{N}^{l}.
\end{equation}
Note that $\mathbf{h}_{k}$ has its last $1$ at the $k$th position and shifting based on the vector $\mathbf{h}_{i-k+1}$ can shift that last $1$ up to $i-k$ position.
As a result, the last $1$ of the vector $\mathbf{h}_{k}$ can be shifted up to the $i$th position, which is less than or equal to $N$.
Considering this, we have
\begin{equation}
\begin{split}
    &\mathbf{g}_{k} \oplus \mathbf{g}_{i} =
    (\mathbf{h}_{k},\mathbf{0}) \oplus (\mathbf{h}_{i}, \mathbf{0}) =
    (\mathbf{h}_{k}\oplus \mathbf{h}_{i}, \mathbf{0})\\
    &= (\mathbf{h}_{k}
    \sum_{\mycom{1\leq l \leq N-1~\text{s.t.}}{h_{i-k+1,l+1}=1}}
    C_{N}^{l}, \mathbf{0}) \\
    &= (\mathbf{h}_{k}, \mathbf{0})
    \sum_{\mycom{1\leq l \leq 2N-1~\text{s.t.}}{g_{i-k+1,l+1}=1}}
    C_{2N}^{l} =
    \mathbf{g}_{k}
    \sum_{\mycom{1\leq l \leq 2N-1~\text{s.t.}}{g_{i-k+1,l+1}=1}}
    C_{2N}^{l},
\end{split}
\end{equation}
where the third equality is by induction and the fourth is by the discussion above.

As an example, for $N =8$, $k = 2$, and $i = 8$, the $2$nd and $8$th rows of matrix $G_{4}$ are
\begin{equation*}
\begin{split}
    &\mathbf{g}_{2} = (\mathbf{h}_{2}, \mathbf{0}
    ) = (1,1,0,0,0,0,0,0,0,0,0,0,0,0,0,0),\\
    & \mathbf{g}_{8} = (\mathbf{h}_{8}, \mathbf{0}
    ) = (1,1,1,1,1,1,1,1,0,0,0,0,0,0,0,0).
\end{split}
\end{equation*}
By knowing that $i-k+1 = 7$, we have
\begin{equation*}
    \mathbf{g}_{2}\oplus \mathbf{g}_{8} = \mathbf{g}_{2}(C_{16}^{2}\oplus C_{16}^{4}\oplus C_{16}^{6}).
\end{equation*}


\textbf{Case 3:} $k \leq N$, $i= j+N$ and $j\geq k$.\\
The $k$th, $j$th, and the $i$th rows of the matrix $G_{n+1}$ are as
$\mathbf{g}_{k}=(\mathbf{h}_{k},\mathbf{0})$, 
$\mathbf{g}_{j}=(\mathbf{h}_{j},\mathbf{0})$,
and
$\mathbf{g}_{i}=(\mathbf{h}_{j},\mathbf{h}_{j})$,
respectively, where $\mathbf{h}_{k}$ and $\mathbf{h}_{j}$ are the corresponding rows of the matrix $G_n$.
By induction, in shifting based on the $r = j-k+1$ row of matrix $G_n$, we have
\begin{equation}\label{eq: shifting}
    \mathbf{h}_{k} \oplus \mathbf{h}_{j} =
    \mathbf{h}_{k}
    \sum_{\mycom{1\leq l \leq N-1~\text{s.t.}}{h_{j-k+1,l+1}=1}}
    C_{N}^{l}.
\end{equation}
Note that the shifts are by the positions greater than or equal to 2 of the vector $\mathbf{h}_r$, which we show as 
$(\times,h_{r,2},h_{r,3},\cdots,h_{r,N})$.
In general, we say that shifting vector $\mathbf{h}_{k}$ based on vector $(\times,h_{r,2},h_{r,3},\cdots,h_{r,N})$ results in $\mathbf{h}_{k} \oplus \mathbf{h}_{j}$ as an alternative way to \eqref{eq: shifting}.

Note that from case 2, we know that with shifting $\mathbf{h}_{k}$ by row $j-k+1$, the last $1$ element of $\mathbf{h}_{k}$ will be shifted at most to the $N$th position. 
From this we can say that shifting $(\mathbf{h}_{k}, \mathbf{0})$ based on $(\times,h_{r,2},h_{r,3},\cdots,h_{r,N}, \mathbf{0})$ is $(\mathbf{h}_{k}\oplus \mathbf{h}_{j}, \mathbf{0})$.
Similarly, shifting the $k$th row $(\mathbf{h}_{k}, \mathbf{0})$ based on 
$(\mathbf{0}, \times,h_{r,2},h_{r,3},\cdots,h_{r,N})$ is
$(\mathbf{0}, \mathbf{h}_{k}\oplus \mathbf{h}_{j})$.
Also, note that shifting a vector $(x_{1},x_{2}, \cdots x_{N}, \mathbf{0})$ based on a vector that only has 1 at its $(N+1)$th position is $(\mathbf{0}, x_{1},x_{2}, \cdots x_{N})$, i.e.
\begin{equation}
    (x_{1},x_{2}, \cdots x_{N}, \mathbf{0})C_{2N}^{N} = (\mathbf{0}, x_{1},x_{2}, \cdots x_{N}).
\end{equation}
As a results, shifting the $k$th row $(\mathbf{h}_{k}, \mathbf{0})$ based on vector 
$$(\times,h_{r,2},h_{r,3},\cdots,h_{r,N},1 ,h_{r,2},h_{r,3},\cdots,h_{r,N})$$
results in
$(\mathbf{h}_{k}\oplus \mathbf{h}_{j}, \mathbf{h}_{k}\oplus (\mathbf{h}_{k}\oplus \mathbf{h}_{j})) = (\mathbf{h}_{k}\oplus \mathbf{h}_{j}, \mathbf{h}_{j})$.
This means that
\begin{equation}
\begin{split}
    & \mathbf{g}_{k}
    \sum_{\mycom{1\leq l \leq 2N-1~\text{s.t.}}{g_{i-k+1,l+1}=1}}
    C_{2N}^{l} \\
    &=
    (\mathbf{h}_{k},\mathbf{0})
    \sum_{\mycom{1\leq l \leq 2N-1~\text{s.t.}}{g_{i-k+1,l+1}=1}}
    C_{2N}^{l} \\
    &= 
    (\mathbf{h}_{k} \oplus \mathbf{h}_{j}, \mathbf{0})
    \oplus (\mathbf{0}, \mathbf{h}_{k})
    \oplus (\mathbf{0}, \mathbf{h}_{k} \oplus \mathbf{h}_{j})
    \\
    & = (\mathbf{h}_{k} \oplus \mathbf{h}_{j}, \mathbf{h}_{j}) = \mathbf{g}_{k} \oplus \mathbf{g}_{i}.
\end{split}
\end{equation}

As an example, for $N=8$, $k = 7$, and $i = 16$, the $7$th and the $16$th rows of matrix $G_{4}$ are
\begin{equation*}
\begin{split}
    &\mathbf{g}_{7}~ = (\mathbf{h}_{7}, \mathbf{0})~ = (1,0,1,0,1,0,1,0,0,0,0,0,0,0,0,0),\\
    & \mathbf{g}_{16} = (\mathbf{h}_{8}, \mathbf{h}_{8}
    ) = (1,1,1,1,1,1,1,1,1,1,1,1,1,1,1,1).
\end{split}
\end{equation*}
As $j = 8$ and $j-k+1 = 2$, the shift is based on 
$(\times,h_{2,2},\cdots,h_{2,8},1,h_{2,2},\cdots,h_{2,8})= (\times,1,0,0,0,0,0,0,1,1,0,0,0,0,0,0)$.
So
\begin{equation*}
    \mathbf{g}_{7} \oplus \mathbf{g}_{16} = 
    \mathbf{g}_{7}(C_{16}^{1}\oplus C_{16}^{8}\oplus C_{16}^{9}).
\end{equation*}


\textbf{Case 4:} $k \leq N$, $i= j+N$, $j\leq k$, and $j > N/2$.\\
We have $N/2 < k \leq N$ and $N+N/2 < i$.
The $k$th and $i$th rows of the matrix $G_{n+1}$ can be written as 
$\mathbf{g}_{k} = (\mathbf{h}_{k}, \mathbf{0}) = (\mathbf{x}, \mathbf{x}, \mathbf{0}, \mathbf{0})$ and 
$\mathbf{g}_{i} = (\mathbf{h}_{j}, \mathbf{h}_{j}) = (\mathbf{y},\mathbf{y},\mathbf{y},\mathbf{y})$, 
respectively, where $\mathbf{h}_{k}$ and $\mathbf{h}_{j}$ are the corresponding rows of the matrix $G_{N}$, and $\mathbf{x}$ and $\mathbf{y}$ are corresponding rows of the matrix $G_{N/2}$.

We have
\begin{equation}
    \mathbf{h}_{k-N/2} \oplus \mathbf{h}_{j} = 
    (\mathbf{x}, \mathbf{0}) \oplus (\mathbf{y}, \mathbf{y}) = 
    (\mathbf{x} \oplus \mathbf{y}, \mathbf{y}),
\end{equation}
and by induction
\begin{equation}\label{eq: base Case4}
    (\mathbf{x}\oplus \mathbf{y}, \mathbf{y}) = 
    (\mathbf{x}, \mathbf{0})
    \sum_{\mycom{1\leq l \leq N-1~\text{s.t.}}{h_{j-k+N/2+1,l+1}=1}}
    C_{N}^{l}.
\end{equation}
Notice that $j-k+N/2+1 \leq N/2$.
So
$\mathbf{g}_{j-k+N/2+1} = (\mathbf{r}, \mathbf{0},\mathbf{0},\mathbf{0})$,
where $\mathbf{r}$ is a row of matrix $G_{N/2}$.
In this way, $i-k+1 = (j-k+N/2+1) + N/2$ and
$\mathbf{g}_{i-k+1} = (\mathbf{r}, \mathbf{r},\mathbf{0},\mathbf{0})$.

From \eqref{eq: base Case4} we can see that shifting 
$(\mathbf{x},\mathbf{0},\mathbf{0},\mathbf{0})$ 
by 
$(\times,r_2,\cdots,r_{N/2},\mathbf{0},\mathbf{0},\mathbf{0})$ 
is 
$(\mathbf{x}\oplus\mathbf{y},\mathbf{y}, \mathbf{0},\mathbf{0})$.

Moreover, shifting $(\mathbf{x},\mathbf{0},\mathbf{0},\mathbf{0})$
by
$(\mathbf{0},1,r_2,\cdots,r_{N/2},\mathbf{0},\mathbf{0})$ 
is
$(\mathbf{0}, \mathbf{x}\oplus(\mathbf{x}\oplus \mathbf{y}),\mathbf{y},\mathbf{0}) = (\mathbf{0}, \mathbf{y},\mathbf{y},\mathbf{0})$.

Likewise, shifting $(\mathbf{0},\mathbf{x},\mathbf{0},\mathbf{0})$
by 
$(\times,r_2,\cdots,r_{N/2},\mathbf{0},\mathbf{0},\mathbf{0})$ 
is also
$(\mathbf{0},\mathbf{x}\oplus \mathbf{y},\mathbf{y}, \mathbf{0})$.

Finally, shifting 
$(\mathbf{0},\mathbf{x},\mathbf{0},\mathbf{0})$
by
$(\mathbf{0},1,r_2,\cdots,r_{N/2},\mathbf{0},\mathbf{0})$ 
is
$(\mathbf{0},\mathbf{0}, \mathbf{x}\oplus (\mathbf{x}\oplus \mathbf{y}),\mathbf{y}) = (\mathbf{0},\mathbf{0}, \mathbf{y},\mathbf{y})$.

By considering these four shifting together, the shifting of 
$(\mathbf{x},\mathbf{x},\mathbf{0},\mathbf{0})$
by
$$(\times,r_2,\cdots,r_{N/2},1,r_2,\cdots,r_{N/2},\mathbf{0},\mathbf{0})$$
is
$(\mathbf{x}\oplus \mathbf{y},\mathbf{y}, \mathbf{0},\mathbf{0}) \oplus
(\mathbf{0}, \mathbf{y},\mathbf{y},\mathbf{0}) \oplus
(\mathbf{0},\mathbf{x}\oplus \mathbf{y},\mathbf{y}, \mathbf{0}) \oplus
(\mathbf{0},\mathbf{0}, \mathbf{y},\mathbf{y}) =
(\mathbf{x}\oplus \mathbf{y}, \mathbf{x}\oplus \mathbf{y},\mathbf{y},\mathbf{y}).$
So, we conclude that
\begin{equation}
\begin{split}
    &\mathbf{g}_{k}
    \sum_{\mycom{1\leq l \leq 2N-1~\text{s.t.}}{g_{i-k+1,l+1}=1}}
    C_{2N}^{l} =
    (\mathbf{x},\mathbf{x},\mathbf{0},\mathbf{0})
    \sum_{\mycom{1\leq l \leq 2N-1~\text{s.t.}}{g_{i-k+1,l+1}=1}}
    C_{2N}^{l}\\
    & = (\mathbf{x}\oplus \mathbf{y}, \mathbf{x}\oplus \mathbf{y},\mathbf{y},\mathbf{y})\\
    & = (\mathbf{x},\mathbf{x},\mathbf{0},\mathbf{0}) \oplus 
    (\mathbf{y},\mathbf{y},\mathbf{y},\mathbf{y}) = \mathbf{g}_{k} \oplus \mathbf{g}_{i}.
\end{split}
\end{equation}

As an example, for $N =8$, $k = 8$, and $i = 15$, the $8$th and $15$th rows of matrix $G_{4}$ are
\begin{equation*}
\begin{split}
    &\mathbf{g}_{8} ~
    =(\mathbf{h}_{8}, \mathbf{0})~
    = (\mathbf{x},\mathbf{x},\mathbf{0},\mathbf{0})\\
    & ~~~~~~~~~~~~~~~~~= (1,1,1,1,1,1,1,1,0,0,0,0,0,0,0,0),\\
    & \mathbf{g}_{15} = (\mathbf{h}_{7}, \mathbf{h}_{7}) =
    (\mathbf{y},\mathbf{y},\mathbf{y},\mathbf{y})\\
    & ~~~~~~~~~~~~~~~~~= (1,0,1,0,1,0,1,0,1,0,1,0,1,0,1,0).
\end{split}
\end{equation*}
As $i-k+1 = 8$,
Shifting is based on the vector $\mathbf{g}_{i-k+1} = (\times,r_2,\cdots,r_{N/2},1,r_2,\cdots,r_{N/2},\mathbf{0},\mathbf{0}) = (\times,1,1,1,1,1,1,1,0,0,0,0,0,0,0,0)$.
So
$\mathbf{g}_8 \oplus \mathbf{g}_{15} = \mathbf{g}_{8}(C^1\oplus C^2\oplus \cdots \oplus C^7)$.

\textbf{Case 5:} $k \leq N$, $i= j+N$, $j\leq k$, and $1 \leq j \leq N/2$ and $k > N/2$.\\
We have that $N < i \leq N + N/2$ and $1\leq i-k+1\leq N$.
The $k$th and the $i$th rows of the matrix $G_{n+1}$ can be written as
$\mathbf{g}_{k} = (\mathbf{h}_{k},\mathbf{0})= (\mathbf{x},\mathbf{x},\mathbf{0},\mathbf{0})$,
and
$\mathbf{g}_{i} = (\mathbf{h}_{j},\mathbf{h}_{j})= (\mathbf{y},\mathbf{0},\mathbf{y},\mathbf{0})$,
respectively, where $\mathbf{h}_{k}$ and $\mathbf{h}_{j}$ are the corresponding rows of the matrix $G_{N}$, and $\mathbf{x}$ and $\mathbf{y}$ are corresponding rows of the matrix $G_{N/2}$.
Also $\mathbf{h}_{k-N/2} = (\mathbf{x}, \mathbf{0})$ and $\mathbf{h}_{i-N/2} = (\mathbf{y}, \mathbf{y})$.
By induction, we have that
\begin{equation}\label{eq: case5Base}
\begin{split}
    & (\mathbf{x}, \mathbf{0})
    \sum_{\mycom{1\leq l \leq N-1~\text{s.t.}}{h_{i-k+1,l+1}=1}}
    C_{N}^{l} = 
    \mathbf{h}_{k-N/2}
    \sum_{\mycom{1\leq l \leq N-1~\text{s.t.}}{g_{i-k+1,l+1}=1}}
    C_{N}^{l}\\
    &= \mathbf{h}_{k-N/2} \oplus \mathbf{h}_{i-N/2} = 
    (\mathbf{x}\oplus \mathbf{y}, \mathbf{y}).
\end{split}
\end{equation}
As $i-k+1\leq N$, $\mathbf{g}_{i-k+1}= (\mathbf{r},\mathbf{0})$.

From \eqref{eq: case5Base} we have that shifting $(\mathbf{x},\mathbf{0},\mathbf{0},\mathbf{0})$
by
$(\times,r_2,\cdots,r_N,\mathbf{0})$ 
is
$(\mathbf{x}\oplus \mathbf{y}, \mathbf{y},\mathbf{0},\mathbf{0})$.

Also shifting 
$(\mathbf{0},\mathbf{x},\mathbf{0},\mathbf{0})$
by
$(\times,r_2,\cdots,r_N,\mathbf{0})$ 
is
$(\mathbf{0},\mathbf{x}\oplus \mathbf{y}, \mathbf{y},\mathbf{0})$.

By considering these two shifting together, the shifting of 
$(\mathbf{x},\mathbf{x},\mathbf{0},\mathbf{0})$
by
$(\times,r_2,\cdots,r_N,\mathbf{0})$
is
$(\mathbf{x}\oplus \mathbf{y}, \mathbf{y}\oplus (\mathbf{x}\oplus \mathbf{y}),\mathbf{y},\mathbf{0}) = 
(\mathbf{x}\oplus \mathbf{y}, \mathbf{x},\mathbf{y},\mathbf{0})$.
So we conclude that
\begin{equation}
\begin{split}
    &\mathbf{g}_{k}
    \sum_{\mycom{1\leq l \leq 2N-1~\text{s.t.}}{g_{i-k+1,l+1}=1}}
    C_{2N}^{l} =
    (\mathbf{x},\mathbf{x},\mathbf{0},\mathbf{0})
    \sum_{\mycom{1\leq l \leq N-1~\text{s.t.}}{g_{i-k+1,l+1}=1}}
    C_{2N}^{l}\\
    & = (\mathbf{x}\oplus \mathbf{y}, \mathbf{x},\mathbf{y},\mathbf{0})\\
    & = (\mathbf{x},\mathbf{x},\mathbf{0},\mathbf{0}) \oplus 
    (\mathbf{y},\mathbf{0},\mathbf{y},\mathbf{0}) = \mathbf{g}_{k} \oplus \mathbf{g}_{i}.
\end{split}
\end{equation}


\textbf{Case 6:} $k \leq N$, $i= j+N$, $j\leq k$, and $1 \leq j \leq N/2$ and $k \leq N/2$.\\
We have that $N < i \leq N + N/2$ and $N/2 < i-k+1\leq N$.
The $k$th and the $i$th rows of the matrix $G_{n+1}$ can be written as
$\mathbf{g}_{k} = (\mathbf{h}_{k},\mathbf{0})= (\mathbf{x},\mathbf{0},\mathbf{0},\mathbf{0})$,
and
$\mathbf{g}_{i} = (\mathbf{h}_{j},\mathbf{h}_{j})= (\mathbf{y},\mathbf{0},\mathbf{y},\mathbf{0})$,
respectively, where $\mathbf{h}_{k}$ and $\mathbf{h}_{j}$ are the corresponding rows of the matrix $G_{N}$, and $\mathbf{x}$ and $\mathbf{y}$ are corresponding rows of the matrix $G_{N/2}$.
Also $\mathbf{h}_{k} = (\mathbf{x}, \mathbf{0})$ and $\mathbf{h}_{i-N/2} = (\mathbf{y}, \mathbf{y})$.
By induction, we have that
\begin{equation}\label{eq: case6Base}
\begin{split}
    & (\mathbf{x}, \mathbf{0})
    \sum_{\mycom{1\leq l \leq N-1~\text{s.t.}}{h_{i-N/2-k+1,l+1}=1}}
    C_{N}^{l} = 
    \mathbf{h}_{k}
    \sum_{\mycom{1\leq l \leq N-1~\text{s.t.}}{h_{i-N/2-k+1,l+1}=1}}
    C_{N}^{l}\\
    &= \mathbf{h}_{k} \oplus \mathbf{h}_{i-N/2} = 
    (\mathbf{x}\oplus \mathbf{y}, \mathbf{y}).
\end{split}
\end{equation}

Note that $1 < i-N/2-k+1 \leq N/2$ and $N/2 < i-k+1 \leq N$.
So
$\mathbf{g}_{i-k+1} = (\mathbf{r}, \mathbf{r},\mathbf{0},\mathbf{0})$.

From \eqref{eq: case6Base}, shifting
$(\mathbf{x},\mathbf{0},\mathbf{0},\mathbf{0})$
by
$\mathbf{g}_{i-N/2-k+1} = (\times,r_2,\cdots,r_{N/2},\mathbf{0},\mathbf{0},\mathbf{0})$
is
$(\mathbf{x}\oplus \mathbf{y}, \mathbf{y},\mathbf{0},\mathbf{0})$.

Likewise, shifting
$(\mathbf{x},\mathbf{0},\mathbf{0},\mathbf{0})$
by
$(\mathbf{0},1,r_2,\cdots,r_{N/2},\mathbf{0},\mathbf{0})$
is
$(\mathbf{0},\mathbf{x}\oplus (\mathbf{x}\oplus \mathbf{y}), \mathbf{y},\mathbf{0})$.

By considering these two shifting together, the shifting of 
$(\mathbf{x},\mathbf{0},\mathbf{0},\mathbf{0})$
by
$$(\times,r_2,\cdots,r_{N/2},1,r_2,\cdots,r_{N/2},\mathbf{0},\mathbf{0})$$
is
$(\mathbf{x}\oplus \mathbf{y},\mathbf{0},\mathbf{y},\mathbf{0})$.

So we conclude that
\begin{equation}
\begin{split}
    &\mathbf{g}_{k}
    \sum_{\mycom{1\leq l \leq 2N-1~\text{s.t.}}{g_{i-k+1,l+1}=1}}
    C_{2N}^{l} =
    (\mathbf{x},\mathbf{0},\mathbf{0},\mathbf{0})
    \sum_{\mycom{1\leq l \leq 2N-1~\text{s.t.}}{g_{i-k+1,l+1}=1}}
    C_{2N}^{l}\\
    & = (\mathbf{x}\oplus \mathbf{y}, \mathbf{0},\mathbf{y},\mathbf{0})\\
    & = (\mathbf{x},\mathbf{0},\mathbf{0},\mathbf{0}) \oplus 
    (\mathbf{y},\mathbf{0},\mathbf{y},\mathbf{0}) = \mathbf{g}_{k} \oplus \mathbf{g}_{i}.
\end{split}
\end{equation}
\end{proof}

Assume that the matrix $T_{N}$ is an $N\times N$ upper-triangular Toeplitz matrix and that the elements along each diagonal of the matrix are identical.
We can see that $T_{N} = \sum_{m} D_{N}^m$ for some values of $m$ and by Theorem \ref{theorem},
$D_{N}^{m}G_{n} = G_{n}\sum_{l} C_{N}^{l}$ for some values of $l$.

\begin{example}
Assume $N=8$. 
For the polynomial $\mathbf{c}(x) = x^6+x^4+x^3+x+1$ (133 in octal form), the convolutional encoder matrix is as

\[
T_8 = \setlength{\arraycolsep}{9pt}
\begin{bmatrix}
    1 & 0 & 1 & 1 & 0 & 1 & 1 & 0 \\
    0 & 1 & 0 & 1 & 1 & 0 & 1 & 1 \\
    0 & 0 & 1 & 0 & 1 & 1 & 0 & 1 \\
    0 & 0 & 0 & 1 & 0 & 1 & 1 & 0 \\
    0 & 0 & 0 & 0 & 1 & 0 & 1 & 1 \\
    0 & 0 & 0 & 0 & 0 & 1 & 0 & 1 \\
    0 & 0 & 0 & 0 & 0 & 0 & 1 & 0 \\
    0 & 0 & 0 & 0 & 0 & 0 & 0 & 1 \\
\end{bmatrix}.
\]
Also,
\[
G_3 = \setlength{\arraycolsep}{9pt}
\begin{bmatrix}
    1 & 0 & 0 & 0 & 0 & 0 & 0 & 0 \\
    1 & 1 & 0 & 0 & 0 & 0 & 0 & 0 \\
    1 & 0 & 1 & 0 & 0 & 0 & 0 & 0 \\
    1 & 1 & 1 & 1 & 0 & 0 & 0 & 0 \\
    1 & 0 & 0 & 0 & 1 & 0 & 0 & 0 \\
    1 & 1 & 0 & 0 & 1 & 1 & 0 & 0 \\
    1 & 0 & 1 & 0 & 1 & 0 & 1 & 0 \\
    1 & 1 & 1 & 1 & 1 & 1 & 1 & 1 \\
\end{bmatrix}.
\]
Note that $T_8G_3 = \left(D^0_8+ D^2_8+ D^3_8+ D^5_8+ D^6_8 \right)G_3$.
Also,
\begin{equation*}
\begin{split}
    &D^0_8G_3 = G_3C_3^0,\\
    &D^2_8G_3 = G_3\sum_{\mycom{1\leq l \leq 7~\text{s.t.}}{g_{3,l+1}=1}}
    C_8^{l} = G_3C_8^2,\\
    &D^3_8G_3 = G_3\sum_{\mycom{1\leq l \leq 7~\text{s.t.}}{g_{4,l+1}=1}}
    C_8^{l} = G_3\left(C_8^1+ C_8^2+ C_8^3\right),\\
    &D^5_8G_3 = G_3\sum_{\mycom{1\leq l \leq 7~\text{s.t.}}{g_{6,l+1}=1}}
    C_8^{l} = G_3\left(C_8^1+ C_8^4+ C_8^5\right),\\
    &D^6_8G_3 = G_3\sum_{\mycom{1\leq l \leq 7~\text{s.t.}}{g_{7,l+1}=1}}
    C_8^{l} = G_3\left(C_8^2+ C_8^4+ C_8^4\right).\\
\end{split}
\end{equation*}
As a result,
$T_8G_3 = \left(D^0_8+ D^2_8+ D^3_8+ D^5_8+ D^6_8 \right)G_3 =
G_3\left(C^0_8+ C^2_8+ C^3_8+ C^5_8+ C^6_8 \right)$.
\end{example}

\begin{figure}[htbp]
\centering
	\includegraphics [width = .8\columnwidth]{./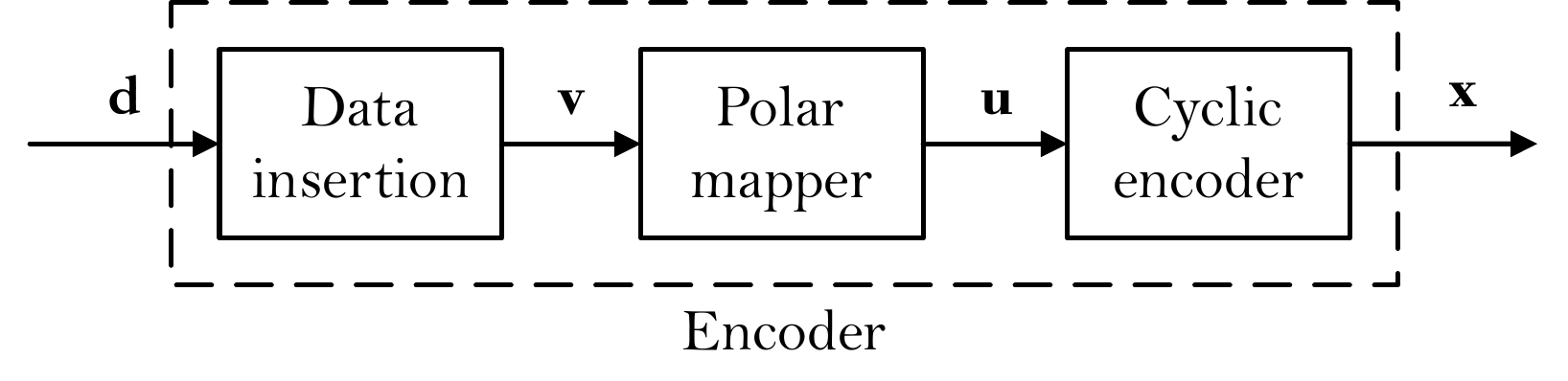}
	\caption{Proposed coding scheme equivalent to PAC codes with inner cyclic and outer polar- RM-like codes.} 
	\label{fig: PolarCyclicEncoder}
\end{figure}

Therefore, for a data word $\mathbf{v}$,
\begin{equation}\label{eq: PACv.Cyclic}
    \mathbf{v}T_NG_n = \mathbf{v}\sum_{m} D_{N}^{m}G_{n} =
    \mathbf{v}G_{n}\sum_{s} C_{N}^{s},
\end{equation}
where the second equality is by Theorem \ref{theorem}.
Fig. \ref{fig: PolarCyclicEncoder} shows the block diagram of our proposed coding scheme equivalent to the PAC coding with cyclic codes as an inner code and polar- and RM-like codes as an outer code.

Assume that the data vector $\mathbf{v}$ has its first $1$ at its $i$th location, i.e. $\mathbf{v}G_{n} = \underline{\mathbf{g}}_{i}$, where $\mathbf{g}_i$ is the $i$th row of the matrix $G_{n}$.
From \eqref{eq: PACv.Cyclic} we have
\begin{equation}
    w(\mathbf{v}T_NG_n) =
    w(\mathbf{v}G_{n}\sum_{s} C_{N}^{s}) =
    w(\underline{\mathbf{g}}_{i}\sum_{s} C_{N}^{s}) \geq w(\mathbf{g}_{i})
    ,
\end{equation}
where the inequality is by Proposition \ref{pro: 1}.
By noticing that the $d_{\text{min}}$ of polar code is the minimum row weight of $G_{N,\mathcal{A}}$,
this proves that $d_{\text{min}}$ for PAC codes is greater than or equal to $d_{\text{min}}$ for the polar- and RM-like codes.

\begin{figure}[htbp]
\centering
	\includegraphics [width = .8\columnwidth]{./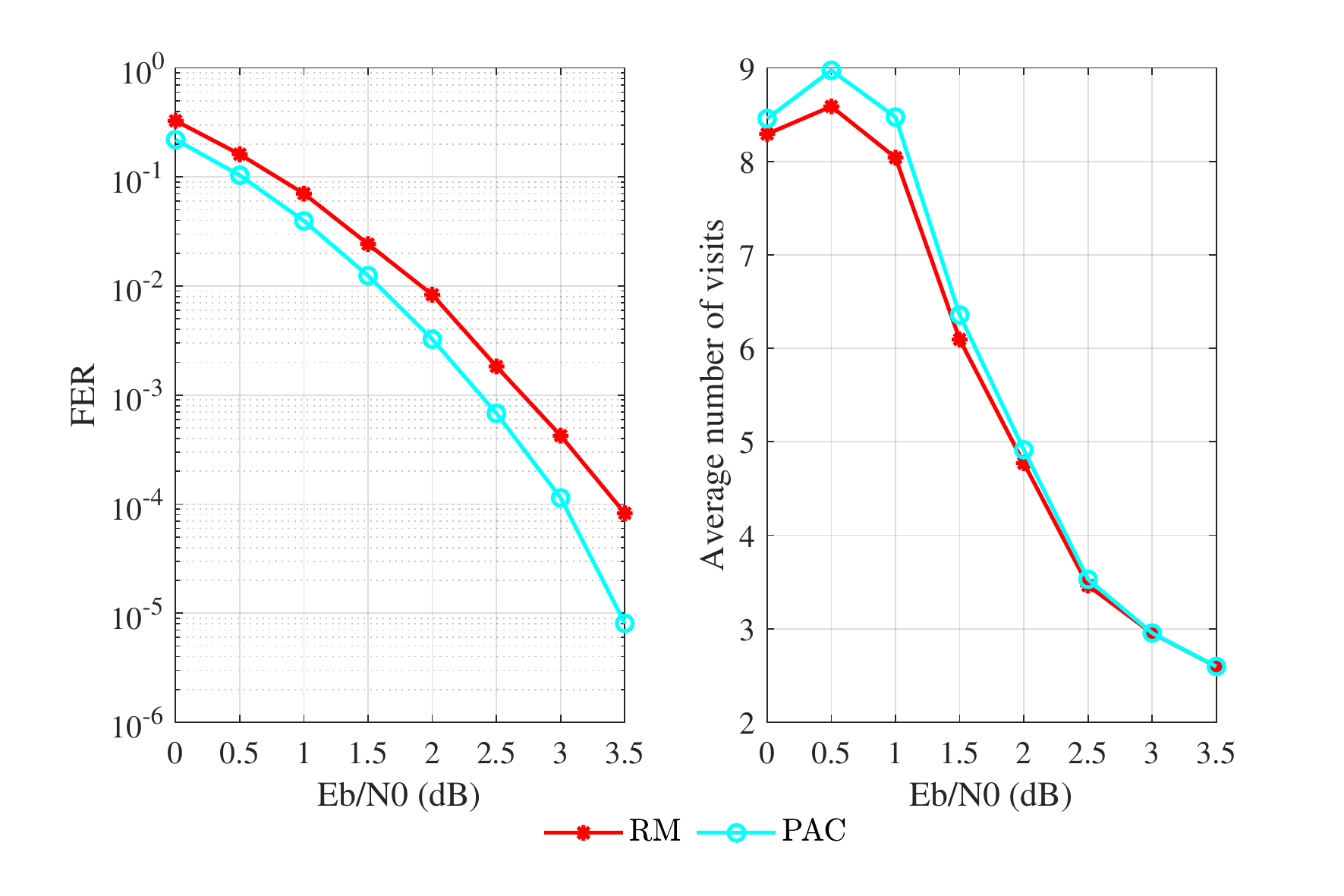}
	\caption{Performance comparison of Fano decoder of PAC and RM codes.} 
	\label{fig: PACvRM}
\end{figure}

Although this proof does not guarantee the strict improvement of the minimum distance of the PAC codes in comparison to the polar- and RM-like codes, simulation results indicate that the weight enumeration of the PAC codes can be significantly improved by selecting an appropriate convolutional code.
Fig. \ref{fig: PACvRM} compares the frame error rate (FER) and decoding complexity of PAC$(128, 29)$ and RM$(128, 29)$ codes. 
We employ the rate profile of the RM code for the PAC code as well, and the connection polynomial is $\mathbf{c}(x) = x^{10}+ x^9 + x^7 + x^3 + 1$ (3211 in octal form).
Complexity is measured by the average number of visits (ANV) per decoded bit \cite{moradi2021sequential}.
While both codes have the same complexity, this figure demonstrates that the PAC code is superior to the RM code regarding error correction performance.  
The minimum distance of both codes is $32$. However, for the RM$(128, 29)$ code, the number of codewords with a weight equal to the minimum distance is 10668, whereas, for the PAC$(128, 29)$ code, this number decreases to 324.

\section{Conclusion}\label{sec: conclusion}
Cyclic shift codes have been extensively explored in the literature, and connecting them to the PAC, RM, and polar codes offers up a field of study for enhancing the analysis of these codes.
We proved that the PAC codes are equivalent to a class of concatenated codes with cyclic codes as the inner code and polar- and RM-like codes as the outer code.
We examined the minimum distance of PAC codes with the goal of providing a new demonstration that PAC codes surpass polar- and RM-like codes in terms of weight distribution. 
We proved that adding an odd number of clockwise cyclic shifts to any row of the polar- and Reed-Muller-like codes generator matrix added with some rows below it does not reduce the row's weight. 
We used this to prove that the minimum distance for PAC codes is greater than or equal to the minimum distance for polar and RM codes. 
 

\ifCLASSOPTIONcaptionsoff
  \newpage
\fi

\bibliographystyle{IEEEtran}
\bibliography{bibliography}

\begin{thebibliography}{1}
\providecommand{\url}[1]{#1}
\csname url@samestyle\endcsname
\providecommand{\newblock}{\relax}
\providecommand{\bibinfo}[2]{#2}
\providecommand{\BIBentrySTDinterwordspacing}{\spaceskip=0pt\relax}
\providecommand{\BIBentryALTinterwordstretchfactor}{4}
\providecommand{\BIBentryALTinterwordspacing}{\spaceskip=\fontdimen2\font plus
\BIBentryALTinterwordstretchfactor\fontdimen3\font minus
  \fontdimen4\font\relax}
\providecommand{\BIBforeignlanguage}[2]{{%
\expandafter\ifx\csname l@#1\endcsname\relax
\typeout{** WARNING: IEEEtran.bst: No hyphenation pattern has been}%
\typeout{** loaded for the language `#1'. Using the pattern for}%
\typeout{** the default language instead.}%
\else
\language=\csname l@#1\endcsname
\fi
#2}}
\providecommand{\BIBdecl}{\relax}
\BIBdecl

\bibitem{moradi2022performance}
M.~Moradi, ``Performance and computational analysis of polarization-adjusted
  convolutional ({PAC}) codes,'' Ph.D. dissertation, Bilkent University, 2022.

\bibitem{arikan2019sequential}
E.~Ar{\i}kan, ``From sequential decoding to channel polarization and back
  again,'' \emph{arXiv preprint arXiv:1908.09594}, 2019.

\bibitem{li2019pre}
B.~Li, H.~Zhang, and J.~Gu, ``On pre-transformed polar codes,'' \emph{arXiv
  preprint arXiv:1912.06359}, 2019.

\bibitem{luo2018analysis}
H.~Luo, G.~Zhang, A.~Maevskiy, V.~Gritsenko, Y.~Zhou, Y.~Chen, R.~Li, Y.~Ge,
  J.~Wang, and J.~Wang, ``Analysis and application of permuted polar codes,''
  in \emph{2018 IEEE Global Communications Conference (GLOBECOM)}.\hskip 1em
  plus 0.5em minus 0.4em\relax IEEE, 2018, pp. 1--5.

\bibitem{luo2020patent}
H.~Luo, G.~Zhang, J.~Wang, R.~Li, Y.~HuangFu, H.~Zhang, Y.~Chen, and J.~Wang,
  ``Polar code transmission method and apparatus,'' in \emph{US Patent
  App}.\hskip 1em plus 0.5em minus 0.4em\relax IEEE, 2020, pp. 16/673,581.

\bibitem{moradi2021monte}
M.~Moradi and A.~Mozammel, ``A {M}onte-{C}arlo based construction of
  polarization-adjusted convolutional ({PAC}) codes,'' \emph{arXiv preprint
  arXiv:2106.08118}, 2021.

\bibitem{moradi2023application}
M.~Moradi, ``Application of guessing to sequential decoding of
  polarization-adjusted convolutional ({PAC}) codes,'' \emph{IEEE Transactions
  on Communications}, 2023.

\bibitem{moradi2021sequential}
------, ``On sequential decoding metric function of polarization-adjusted
  convolutional ({PAC}) codes,'' \emph{IEEE Transactions on Communications},
  vol.~69, no.~12, pp. 7913--7922, 2021.

\bibitem{arikan2009channel}
E.~Ar{\i}kan, ``Channel polarization: {A} method for constructing
  capacity-achieving codes for symmetric binary-input memoryless channels,''
  \emph{IEEE Transactions on information Theory}, vol.~55, no.~7, pp.
  3051--3073, 2009.

\end{thebibliography}

\end{document}